\theoremstyle{plain}
\newtheorem{theorem}{Theorem}
\newtheorem{lemma}[theorem]{Lemma}
\newtheorem{corollary}[theorem]{Corollary}
\theoremstyle{remark}
\newtheorem{remark}{Remark}
\newtheorem{assumption}{Assumption}
\theoremstyle{definition}
\newtheorem{definition}[theorem]{Definition}
\newcommand{\R}{{\mathbb R}}
\renewcommand{\i}{i}
\renewcommand{\d}{\mathrm{d}}
\begin{document}

\title{Unique Continuation for the Magnetic Schrödinger Equation}

\author{Andre Laestadius}

\email{andre.laestadius@kjemi.uio.no}
\affiliation{
	Hylleraas Centre for Quantum Molecular Sciences, Department of Chemistry, University of Oslo, P.O.~Box 1033 Blindern, N-0315 Oslo, Norway}

\author{Michael Benedicks}

\affiliation{Department of Mathematics, Uppsala University, Box 480, 751 06 Uppsala, Sweden}
%Department of Mathematics, KTH Royal Institute of Technology, 114 28 Stockholm, Sweden}

\author{Markus Penz}

\affiliation{%Max-Planck-Institut f\"ur Strukture und Dynamik der Materie, 
Max Planck Institute for the Structure and Dynamics of Matter, Luruper Chaussee 149, 22761 Hamburg, Germany}

\date{\today}

\begin{abstract}
The unique-continuation property from sets of positive measure is here proven for the many-body magnetic Schr\"odinger equation. This property guarantees that if a solution of the Schr\"odinger equation vanishes on a set of positive measure, then it is identically zero. We explicitly consider potentials written as sums of either one-body or two-body functions, typical for Hamiltonians in many-body quantum mechanics. As a special case, we are able to treat atomic and molecular Hamiltonians. The unique-continuation property plays an important role in density-functional theories, which underpins its relevance in quantum chemistry. \newline

\noindent \textbf{Keywords.} unique-continuation property; Hohenberg--Kohn theorem; magnetic Schr\"odinger equation; Kato class; molecular Hamiltonian 
\end{abstract}

\maketitle

\section{Introduction}

Within the Schr\"odinger model for quantum systems of (interacting) electrons, in order to be able to describe interesting phenomena like the Zeeman effect, the quantum Hall effect, or the Hofstadter butterfly one has to include the effects of both an electric and a magnetic field. 
Hohenberg and Kohn showed for systems without magnetic fields that the one-body ground-state particle density determines the electric (scalar) potential up to a constant~\cite{Hohenberg1964}. Strictly speaking, the particle density determines at most one potential (modulo an additive constant) since some densities are not associated with any potential~\cite{ENGLISCH1983}. The above correspondence between densities and potentials constitutes the theoretical foundation on which Density-Functional Theory (DFT)---a ubiquitous tool in quantum chemistry and materials science~\cite{Burke2012,Jones2015}---is built.

In the presence of magnetic fields though, the approach of Hohenberg and Kohn to set up a universal density functional requires more than just the particle density due to the fact that an additional vector potential enters the system's Hamiltonian. Both the para\-magnetic current density and the total (physical) current density have been suggested as basic variables alongside the particle density \cite{Vignale1987,Vignale1988,Diener} and the resulting framework is called Current-Density-Functional Theory (CDFT). 
For the theory that uses the para\-magnetic current density, counterexamples to a Hohenberg--Kohn theorem are known~\cite{Capelle2002,Laestadius2014}, although a weaker version still holds. Note that even for degenerate systems the weaker version is enough to define a universal paramagnetic current-density functional~\cite{LaestadiusTellgren2018}. 
Diener has presented an argument~\cite{Diener} for establishing a full Hohenberg--Kohn theorem using the total current density. However, as first noted in Ref.~\cite{Tellgren2012}, the argument is at best incomplete.

For more detailed accounts on the existence of generalized Hohenberg--Kohn theorems within CDFT see Refs.~\cite{Tellgren2012,Laestadius2014,Tellgren2018}, and for related and positive results within the Maxwell--Schr\"odinger theory and quantum-electrodynamical DFT see Refs.~\cite{Ruggenthaler2015,TellgrenSolo2018,Garrigue2019}. An interesting and recent development is also given in Ref.~\cite{Garrigue2019b} where the existence of generalized Hohenberg--Kohn theorems is further explored. 
A different route, where a Hohenberg--Kohn result comes for free by virtue of the convex-analytic properties of a \emph{regularized} energy functional was taken in Refs.~\cite{Kvaal2014,KSpaper2018}. It was specifically implemented for CDFT in Ref.~\cite{MY-CDFTpaper2019} and can even be used to prove convergence of the associated Kohn--Sham iteration scheme~\cite{penz2019guaranteed}.

The current work arises as a natural ingredient for proving a generalized Hohenberg--Kohn theorem in total (physical) CDFT. It addresses the property that a solution of the magnetic Schr\"odinger equation cannot vanish on a set of positive measure, a property called unique continuation, see Definition~\ref{def:ucp}. Unique continuation is also a fundamental property for solutions of the magnetic Schr\"odinger equation in its own right and has been well-studied~\cite{H,BKRS,Wolff1992,Kurata1,Kurata2,Regbaoui}. 
In the context of CDFT the issue was first raised in Ref.~\cite{Laestadius2014}. 
As far as DFT and CDFT are concerned, it is useful to have the assumptions guaranteeing the unique-continuation property as particle-number independent as possible (at least avoid increasing itegrability constraints with increasing $N$), which is a
 difficult task. In the present work we obtain results that are adapted to the many-body Schr\"odinger equation 
and that furthermore include vector potentials, building on the results of Kurata \cite{Kurata2} and Regbaoui \cite{Regbaoui}. This means that the specific structure of the potentials is beneficially taken into account. The main results, Theorem~\ref{thm:5} and Corollary~\ref{cor:atoms}, that include the singular Coulomb potentials of atoms and molecules as a special case (Corollary~\ref{cor:mol-case}), are formulated in terms of the Kato class $K_\mathrm{loc}^n$ and its generalization $K_\mathrm{loc}^{n,\delta}$, with $n=3$ and $n=6$ (Definition~\ref{kato}).
Although we cannot answer the question of the existence of a generalized Hohenberg--Kohn theorem for the total current in CDFT, we exemplify the use of the unique-continuation property in a limited special case (Corollary~\ref{cor:cdft}).

\section{Unique-continuation property and the Hohenberg--Kohn theorem} 
In the most simple setting of only one particle and without vector
potential, it is known that the (unique) ground state $\psi$ in
$H^1(\mathbb{R}^3)$ can be chosen to be strictly positive, see Theorem 11.8 in Lieb--Loss~\cite{LiebLoss}. 
This means that we can set $\rho^{1/2}=\psi>0$ and the following relation to the scalar potential $v$ must hold (from the Schr\"odinger equation, here written as $(-\Delta + v)\psi = e \psi$)
\begin{equation} \label{eq:vin}
v(x)= e +  \frac{\Delta \rho(x)^{1/2}}{\rho(x)^{1/2}},\quad  x\in\mathbb R^3,
\end{equation}
where $\Delta$ denotes the Laplacian and $e$ is the ground-state energy. Conversely, given a particle density $\rho$ we can ask if a potential $v$ exists 
such that the given $\rho$ is the ground-state density of that potential. 
For the one-particle case, this problem has been studied by Englisch and Englisch~\cite{ENGLISCH1983} and was answered in the negative even for well-behaved densities ($N$-representable densities). Corollary~3 in Ref.~\cite{Laestadius2014b} (including the additional constraint $\Delta \rho^{1/2} \leq C \rho^{1/2}$ and $\rho^{-1}\in L_\mathrm{loc}^1$ besides $N$-representability) provides sufficient conditions for one-particle $v$-representability, i.e., $v$ can be computed from $\rho$ as given in Eq.~\eqref{eq:vin} and $\rho$ is the ground-state density of that $v$.

Returning to the general $N$-electron case without magnetic field, we first recall the Hohenberg--Kohn theorem: Given two systems, if $\rho_1 = \rho_2$ then $v_1 = v_2 \,+$ constant, where $\rho_k$, $k=1,2$, is the ground-state particle density of the corresponding system defined by the potential $v_k$. 
The proof of this result relies on the fact that if $\psi$ is a ground state of both systems, then 
$\sum_{k=1}^N(v_1(x_k) - v_2(x_k))\psi = \text{constant} \times \psi$. If $\psi$ does not vanish on a set of positive (Lebesgue) measure we have $v_1 =v_2\,+$ constant almost everywhere. 
The proof can then be completed by means of the variational principle, using the Hohenberg--Kohn argument by reductio ad absurdum~\cite{Hohenberg1964}. 
(Note that a strict inequality in the variational principle is not needed, see e.g.~\cite{Garrigue2018}, and that the results also hold for systems with degeneracy~\cite{ENGLISCH1983}.)

In this article we address the more general case of $N$ interacting, non-relativistic (spinless) particles
subjected to both a scalar and a vector potential. The fundamental question then is, whether any eigenfunction of the corresponding Hamiltonian
\begin{align}
\label{HN}
H_N = \sum_{j=1}^N \Big[(\i\nabla_j + A(x_j))^2 + v(x_j) +  \sum_{l < j} u(x_j,x_l) \Big]
\end{align}
can be zero on a set of positive measure without being identically zero. This is a problem of {\it unique continuation}.

\begin{definition} \label{def:ucp}
We say that the Schr\"odinger equation $H_N \psi = e\psi$ has the unique-continuation property (UCP) from sets of positive (Lebesgue) measure if 
a solution that satisfies $\psi=0$ on a set of positive measure is identically zero.  
Furthermore, the Schr\"odinger equation is said to have  
the strong UCP if whenever $\psi$ vanishes to infinite order at some point $x_0$, i.e., for all $m>0$
\begin{equation*}
\int_{|x-x_0|\leq r} |\psi(x)|^2 \d x = \mathcal O(r^m) \quad (r\to 0),
\end{equation*}
then $\psi$ is identically zero. Additionally, if $\psi=0$ on a non-empty 
open set implies that $\psi$ is identically zero, then the
Schr\"odinger equation has the weak UCP. 
\end{definition}
\begin{remark}
The strong UCP implies the weak UCP. 
The UCP from sets of positive measure allows us to conclude $\psi\neq 0$ almost everywhere for any eigenfunction of $H_N$. 
\end{remark}

There exists a considerable amount of literature that treats the UCP for differential inequality 
$|\Delta \psi| \leq |\xi_1| |\nabla \psi| + |\xi_2| |\psi|$~\cite{H,BKRS,Wolff1992,Kurata1,Kurata2,Regbaoui}. In particular if
$\xi_1 \in L_\mathrm{loc}^n(\mathbb R^n)$ and $\xi_2 \in
L_\mathrm{loc}^{n/2}(\mathbb R^n)$, the corresponding differential
equation $\Delta\psi=  \xi_1\cdot \nabla \psi + \xi_2 \,\psi $ has the UCP from sets of positive measure~\cite{Regbaoui}. 
Note that such $L^p_\mathrm{loc}$ constraints become more restrictive with
increasing particle number, since the dimension of the
configuration space $n$ enters in the conditions.
Directly applied to $H_N \psi = e\psi$ this means that if a solution $\psi$ in the Sobolev
space $H_\mathrm{loc}^{2N/(N+2)}(\mathbb R{^{3N}})$ vanishes on a set of positive measure,   
\begin{align*}
\sum_{j=1}^N \Big[v(x_j) + |A(x_j)|^2 &+\i (\nabla_j\cdot A(x_j)) \\
&+\sum_{ l < j} 	u(x_j,x_l)\Big] \in L_\mathrm{loc}^{3N/2}(\mathbb R^{3N}),
\end{align*}
and each component of $ A$ belongs to $L_\mathrm{loc}^{3N}(\mathbb
R^{3N})$, then $\psi$ is identically zero.

Such results are used by Lammert \cite{Lammert2018}, particularly in his Theorem~5.1, to give a mathematically precise proof of the Hohenberg--Kohn
theorem~\cite{Hohenberg1964} in DFT including the UCP as remarked by Lieb~\cite{Lieb1983}. Yet he does not consider magnetic fields and the constraints
are very susceptible to the particle-number. 
%On the other hand, the treatment of Lammert allows singular potential even in the case of high particle numbers, since the $L^{n/2}_\mathrm{loc}$ constraint applies to a domain 
%$\Omega \subseteq \mathbb{R}^3$ that is connected and of full measure. 
%This way the singularities at the position of the nuclei in the case of a molecular Coulomb Hamiltonian can be simply removed from the domain. 
A recent effort by Garrigue \cite{Garrigue2018} removed the dependence on particle
numbers for the constraints on the scalar potential by exploiting their
specific shape in the context of many-body (molecular) Hamiltonians.
Ref.~\cite{Garrigue2018} also contains a rigorous proof of the Hohenberg--Kohn theorem including all the mathematical details for potentials $v\in L_\mathrm{loc}^p(\mathbb R^3)$, $p>2$.

\section{Prerequisites}
Let the Hamiltonian $H_N$ be as in \eqref{HN}. The Schr\"odinger
equation is then given by $H_N\psi = e \psi$. 
We write $H_N = T_A + V + U$, 
where $T_A= \sum_{j=1}^N (\i\nabla_j + A(x_j))^2$, $\nabla_j =
(\frac{\partial}{\partial x_j^1},\frac{\partial}{\partial
	x_j^2},\frac{\partial}{\partial x_j^3})$ and
$x_j=(x_j^1,x_j^2,x_j^3)\in \mathbb{R}^3$ are the coordinates of the
$j$:th electron. Here we use $(\i\nabla_j +  A(x_j) )^2$ in $T_A$ instead
of $(-\i\nabla_j +  A(x_j))^2$ in order to follow the notation in Kurata \cite{Kurata2}. We use a slight variation of atomic units $\hbar = 2m_e = 1$ and $q_e=-1$, such that the Laplace operator appears without a factor $1/2$.

The electric potential $V$ is a one-body potential given by 
$V(x)= \sum_{j=1}^N v(x_j)$ with $v:\mathbb{R}^3\rightarrow \mathbb{R}$. 
The two-particle interaction $U$ between the electrons is modeled by $U(x)= \sum_{1\leq j<l \leq N} u(x_j,x_l)$, for some non-negative function $u$ on $\mathbb R^3\times \mathbb R^3$. We set $W=V+U$. Furthermore,  
$A:\mathbb{R}^3\rightarrow\mathbb{R}^3$ denotes the vector potential,
from which the magnetic field is obtained by $B=\nabla \times A$. 
With the notation $\underline A(x) = (A(x_j))_{j=1}^N$, the Schr\"odinger
equation is rewritten as 
\begin{equation}
\label{SE2}
-\Delta\psi + 2\i \underline A \cdot \nabla \psi + (W_A - e)\psi=0, 
\end{equation}
where $W_A = W + |\underline A|^2 + \i(\nabla \cdot\underline A)$.

A function $f\in L_{\mathrm{loc}}^2(\mathbb R^n)$ belongs to the Sobolev
space $H_{\mathrm{loc}}^k(\mathbb R^n)$ if $f$ has weak derivatives up to order $k$ that belong to $L_{\mathrm{loc}}^2(\mathbb R^n)$. 
Let the set	of infinitely differentiable functions with compact support on $\mathbb R^{3N}$ be denoted by $C_0^\infty(\mathbb R^{3N})$. We say that 
$\psi\in H_\mathrm{loc}^1(\mathbb R^{3N})$ is a solution of~\eqref{SE2} in the weak sense, which will be our standard notion for solutions from here on,  
if for all $\varphi \in C_0^\infty(\mathbb R^{3N})$

\begin{align}
\int_{\mathbb R^{3N}} \nabla \psi \cdot \nabla \overline{\varphi}\,\d x &+ 2\i \int_{\mathbb R^{3N}}  \underline A \cdot (\nabla \psi) \overline{\varphi}\,\d x  \nonumber  \\
&+ \int_{\mathbb R^{3N}} (W_A - e)\psi \overline{\varphi}\, \d x=0.  \label{SE3}
\end{align}
The present work takes off from the following result: 
\begin{theorem}[Theorem~1.2 in Regbaoui \cite{Regbaoui}] Let $N\geq 1$.
	Assume that $W_A\in L_\mathrm{loc}^{3N/2}(\mathbb R^{3N})$ and each component of
	$ A$ is an element of $L_\mathrm{loc}^{3N}(\mathbb R^{3N})$. Then the
	Schr\"odinger equation has the UCP from sets of positive
	measure, i.e., if a solution $\psi\in H_\mathrm{loc}^{2N/(N+2)}(\mathbb R^{3N})$
	vanishes on a set of positive measure then it is identically 
	zero. \label{Thm:Reg}
\end{theorem}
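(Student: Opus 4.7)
The plan is to recast the weak equation \eqref{SE3} as the pointwise differential inequality
\[
|\Delta \psi| \leq a\, |\nabla \psi| + b\, |\psi|
\]
almost everywhere on $\mathbb R^{3N}$, where $a := 2|\underline A| \in L^{n}_{\mathrm{loc}}$ and $b := |W_A - e| \in L^{n/2}_{\mathrm{loc}}$ with $n=3N$. This reduces the problem to the scalar-coefficient UCP setting treated by Wolff \cite{Wolff1992} and Regbaoui \cite{Regbaoui}. A preliminary bootstrap is required: the hypothesis $\psi \in H^{2N/(N+2)}_{\mathrm{loc}}$ is chosen precisely so that the Sobolev embedding of $H^{2N/(N+2)}$ pairs by Hölder with the $L^n$ and $L^{n/2}$ norms of $a$ and $b$, making the right-hand side of the equation locally integrable and the inequality genuinely pointwise.

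The central ingredient is a weighted Carleman estimate of the form
\[
\tau \bigl\| |x-x_0|^{-\tau} u \bigr\|_{L^{q}(\mathbb R^n)} + \bigl\| |x-x_0|^{-\tau+1} \nabla u \bigr\|_{L^{r}(\mathbb R^n)} \leq C\, \bigl\| |x-x_0|^{-\tau+2} \Delta u \bigr\|_{L^{p}(\mathbb R^n)},
\]
valid for all $u \in C_0^\infty(\mathbb R^n \setminus \{x_0\})$ and all sufficiently large $\tau$, with exponents chosen so that $q=2n/(n-2)$ matches the $L^{n/2}$ weight on $b$ via Hölder and $r$ matches the $L^n$ weight on $a$. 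Applying this estimate to $u=\chi\psi$ for a suitable cutoff $\chi$ and using the differential inequality, one absorbs the $a\,\nabla u$ and $b\, u$ contributions into the left-hand side once $\tau$ is large enough. Sending $\tau \to \infty$ forces $\psi \equiv 0$ in a ball around any point $x_0$ at which $\psi$ vanishes to infinite order, and connectedness then yields the strong UCP.

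To upgrade the strong UCP to UCP from sets of positive measure, I would invoke the Lebesgue density theorem: at almost every $x_0$ of the zero set $E$ one has $|B_r(x_0)\setminus E| = o(r^n)$. Combined with the Sobolev-improved local integrability of $\psi$ and a Poincaré-type inequality on balls, this forces $\int_{B_r(x_0)}|\psi|^2\,\d x = \mathcal O(r^m)$ for every $m$. Hence $\psi$ vanishes to infinite order at $x_0$ and the strong UCP established in the previous step closes the argument.

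The main obstacle is the Carleman estimate itself, particularly the inclusion of the gradient term on the left with the correct $L^r$ scaling. The classical $L^2$ Carleman estimates are too weak to absorb a first-order perturbation with coefficient merely in $L^n_{\mathrm{loc}}$; one must work at the Sobolev-critical exponent $q=2n/(n-2)$, and the presence of the $\nabla u$ norm requires a delicate spherical-harmonic decomposition combined with Stein--Tomas-type Fourier restriction bounds, as in Wolff's construction and its refinement by Regbaoui. Once this sharp estimate is available, the remaining steps are essentially standard.
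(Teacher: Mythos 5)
This statement is quoted verbatim from Regbaoui (his Theorem~1.2) and is not proved in the paper at hand, so there is no internal proof to compare against; the remark immediately following Theorem~\ref{Thm:Reg} simply points the reader to Regbaoui and to Wolff. Your sketch is therefore an outline of Regbaoui's own argument, and as an outline it is structurally on target: pass to the pointwise inequality $|\Delta\psi| \le 2|\underline A|\,|\nabla\psi| + |W_A - e|\,|\psi|$ after a regularity bootstrap, prove strong UCP via a sharp $L^p$--$L^q$ Carleman estimate with power weights that also controls the gradient term (this is exactly Wolff's contribution, extended by Regbaoui), and then upgrade to UCP from sets of positive measure by a density-point argument.

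Two concrete weaknesses remain. First, the Carleman estimate you display, together with the absorption scheme, essentially \emph{is} the theorem; you rightly identify it as the main obstacle, but writing its schematic form and deferring to Wolff and Regbaoui leaves the sketch without any new content for the strong UCP half. Second, and more correctable, the upgrade step is misdescribed. The inequality you need on balls is not a Poincar\'e inequality (which controls $\psi$ by $\nabla\psi$) but its \emph{inverse}, a Caccioppoli-type estimate $\int_{B_r}|\nabla\psi|^2\,\d x \le (C/r^2)\int_{B_{2r}}|\psi|^2\,\d x$ obtained by testing the weak equation against $h^2\psi$ with a cutoff $h$. Moreover, Lebesgue density plus that inequality are still not enough to force $\int_{B_r}|\psi|^2 = \mathcal O(r^m)$: the essential intermediate ingredient is the measure-theoretic lemma of Ladyzenskaya--Ural'tzeva (Regbaoui's Lemma~3.3), which converts the smallness of $|B_r\setminus E|$ at a density point into a bound on $\int_{B_r\cap E^c}|\psi|^2$ in terms of $\int_{B_r}|\nabla(\psi^2)|\,\d x$; only after Cauchy--Schwarz and the inverse Poincar\'e bound does one get the iterable inequality $\int_{B_r}|\psi|^2 \le C'\varepsilon^{2/D}(1-\varepsilon)^{-2}\int_{B_{2r}}|\psi|^2$. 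This upgrade is carried out in full in the paper's proof of Theorem~\ref{Thm:Main}, under Kato-class hypotheses rather than $L^p_{\mathrm{loc}}$ ones, and reading that proof will show you exactly what your "Poincar\'e-type inequality" must be and why the extra measure-theoretic lemma is indispensable.
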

\begin{remark}
	See also Theorem~1.1 in Regbaoui \cite{Regbaoui} for the strong UCP and Wolff \cite{Wolff1992} for the weak UCP. 
\end{remark}

If one employs Theorem~\ref{Thm:Reg} with $N=1$, since there is no two-particle interaction it suffices to assume that  $v,|A|^2$ and $\nabla\cdot A$ are elements of 
 $L_\mathrm{loc}^{3/2}(\mathbb R^3)$ to obtain the UCP from sets of
positive measure. 
With increasing particle number, however, the assumptions on the
potentials $v$, $u$, and $A$ are such that they rule out most types of
singularities. 
On the other hand, the particle-number dependence that enters in $H_\mathrm{loc}^{2N/(N+2)}$ fulfills $2N/(N+2)\leq 2$ for all $N$. 
Following Kurata~\cite{Kurata2} the $L_\mathrm{loc}^p$ constraints, with $p$
proportional to $N$, can be avoided.

\begin{definition}\label{kato}
	A function $f\in L_\mathrm{loc}^1(\mathbb R^n)$ belongs to the Kato class $K_{\mathrm{loc}}^n$, $n\neq 2$, if for every $R>0$, $\lim_{r\rightarrow 0+} \eta^K(r;f)=0$, where
	\[  
	\eta^K(r;f) = \sup_{|x| \leq R} \int_{B_{r}(x)}\frac{|f(y)|}{|x-y|^{n-2}}\d y.
	\]
	Furthermore, $f\in K_\mathrm{loc}^{n,\delta} \subset K_\mathrm{loc}^n$, $\delta>0$, if for every $R>0$
	\[
	\lim_{r\to 0+} \sup_{|x| \leq R} \int_{B_{r}(x)}\frac{|f(y)|}{|x-y|^{n-2+ \delta}}\d y=0.
	\]
\end{definition}

We write $f= f_+-f_-$, where $f_-$ ($f_+$) is the negative (positive) part of $f$ given by $f_-(x) = \max(-f(x),0)$ ($f_+(x) = \max(f(x),0)$). 
Let $y\in \mathbb R^{3N}$ be fixed and for
$x=(x_1,\dots,x_N)$ $\in{\mathbb R}^{3N}$ (cf.~the notation in Kurata \cite{Kurata2})
\begin{align*}
a(x) &=\vert \underline A(x) \vert^2, \\ %= \sum_{j=1}^N |A(x_j)|^2
b_{y}(x) &= |x-y|^2\sum_{j=1}^N |B(x_j)|^2,\\
Q_{y}(x) &= (2W + (x-y) \cdot \nabla W)_{-}\,.
\end{align*}
With the notation above, we formulate 
\begin{assumption}\label{A:1}
	Suppose
	\begin{align*}
	    &\nabla\cdot A \in L_\mathrm{loc}^2(\mathbb R^3), \\
	    & A^j\in L_\mathrm{loc}^4(\mathbb R^3)\quad \text{($A^j$ component of $A$)} , \\
	    & a, b_{y},W, Q_{y} \in K_{\mathrm{loc}}^{3N} \quad \text{(for fixed $y\in \mathbb R^{3N}$)},
	\end{align*}
	and that for some $r_0>0$
	\begin{equation}
	\int_0^{r_0} \frac{\theta_y(r)}{r} \d r< \infty
	\label{theta}
	\end{equation}
	holds, where $\theta_y(r) = \eta^K(r;Q_{y}) + \eta^K(r; b_{y})^{1/2}$.
\end{assumption}
\begin{remark}\label{rmk:Kurata}
	Remark~1.2 in Kurata \cite{Kurata2} gives $b_{y},Q_{y}\in
	K_\mathrm{loc}^{3N,\delta}$, for some $\delta>0$, as a sufficient condition for
	$\eqref{theta}$ to hold.  	
\end{remark}
\begin{remark}
The main condition in Assumption~\ref{A:1} is with respect to the Kato class
$K_\mathrm{loc}^n$, $n=3N$ being the dimensionality of the underlying configuration
space. This condition is \emph{optimal} in the sense that the class
cannot be enlarged to \emph{smaller} orders than $n=3N$, or the UCP
will be lost. This follows from the inclusion $L_\mathrm{loc}^p \subset
K_\mathrm{loc}^n$ for all $p > n/2$ and a sharp counterexample provided in
Ref.~\cite{Koch-Tataru2007} for a potential in $L^p$, $p < n/2$. So if the order of
the Kato class would be any $m < n$ then it also includes $L_\mathrm{loc}^p$ 
with $m/2 < p < n/2$ and that is ruled out by the given counterexample.
\end{remark}

The following is obtained by adapting Corollary~1.1 in Kurata \cite{Kurata2} (denoted Lemma~\ref{Lemma} below). For the sake of simplicity, and since it is enough for our purposes here, we make the restrictions to real-valued $V$ and $U$. In the sequel we use the notation $\vert F \vert$ for the Frobenius norm (also called the Hilbert--Schmidt norm) of a matrix $(F_{j,l})_{j,l}$. 
\begin{theorem}\label{th:strongUCP}
	Suppose Assumption~\ref{A:1}. 
	If $\psi \in
	H_{\mathrm{loc}}^2 (\mathbb R^{3N})$ is a solution of \eqref{SE2}
	and vanishes to infinite order at $x_0 \in R^{3N}$, then $\psi$ is
	identically zero. Thus the Schr\"odinger equation has the strong UCP.\label{Thm:adaptedK}
\end{theorem}
\begin{lemma}[Corollary~1.1 in Kurata \cite{Kurata2}]\label{lemma:Kurata}
	Let $n\geq 3$, $x_0\in \mathbb R^n$ be fixed, $x=(x^1, \dots,x^n)\in\mathbb R^n$, $\tilde A =(\tilde{A}^1,\dots, \tilde{A}^n): \mathbb{R}^n\to \mathbb{R}^n$, $\tilde{W}: \mathbb{R}^n\to \mathbb{R}$, 
	$F = (F_{j,l})_{j,l=1}^n$ with $F_{j,l} = \partial \tilde A^j/\partial x^l - \partial \tilde A^l/\partial x^j$, and suppose that 
	\begin{align}
	&\tilde{A}^j\in  L_\mathrm{loc}^4(\mathbb R^n),\quad \nabla\cdot \tilde A\in L_\mathrm{loc}^2(\mathbb R^n), 
	\quad \vert \tilde A\vert^2 \in K_\mathrm{loc}^n, \nonumber \\
	&  (\vert x-x_0\vert \,\vert F\vert)^2\in K_{\mathrm{loc}}^n \label{Cond1},
	\end{align}
	and
	\begin{align}
	\tilde{W} \in K_\mathrm{loc}^n, \quad (2\tilde W + (x-x_0)\cdot \nabla \tilde W)_-\in K_{\mathrm{loc}}^n.\label{Cond2}
	\end{align}
	Furthermore assume, for some $r_0>0$, 
	\begin{align}
	\int_0^{r_0}  \Big[ \eta^K(r;(2\tilde W &+ (x-x_0) \cdot \nabla \tilde W)_{-}) \nonumber \\
	&+ \eta^K(r; (\vert x-x_0\vert \,\vert F\vert)^2  )^{1/2}\Big] \frac{\d r}{r} < \infty
	\label{thetatilde}
	\end{align}
	holds. Then if $\psi \in H_\mathrm{loc}^{2}(\mathbb R^n)$ satisfies 
	\begin{align} \label{eq:KurataSE}
	\Big(\sum_{j=1}^n \Big( \i\frac{\partial}{\partial x^j} + \tilde{A}^j(x)  \Big)^2 + \tilde{W}(x) \Big)\psi =0
	\end{align}
	and vanishes to infinite order at $x_0$, it follows that $\psi$ is identically zero. 
	\label{Lemma}
\end{lemma}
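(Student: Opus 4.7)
The strategy is the Carleman-estimate approach due to Kurata. Let $r=|x-x_0|$ and $P=\sum_{j=1}^n(\i\,\partial/\partial x^j+\tilde A^j)^2+\tilde W$. I would aim for a weighted \emph{a priori} inequality
\begin{equation*}
\tau\int e^{2\tau\phi(r)}\,g(r)\,|u|^2\,\d x\;\leq\;C\int e^{2\tau\phi(r)}\,h(r)\,|Pu|^2\,\d x,
\end{equation*}
valid uniformly in $\tau\geq\tau_0$ for all $u\in C_0^\infty(B_{r_0}(x_0)\setminus\{x_0\})$, where $\phi$ is a radial weight convex in $\log r$ (the pseudoconvexity condition) and $g,h$ are appropriate negative powers of $r$. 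Granted such an estimate, one applies it to $u=\chi\psi_\varepsilon$, where $\psi_\varepsilon$ is a mollification of $\psi$ (justified by $\psi\in H^2_\mathrm{loc}$ and the weak formulation of \eqref{eq:KurataSE}) and $\chi$ is a smooth cutoff supported in $B_{r_2}(x_0)$ and equal to $1$ on $B_{r_1}(x_0)$. Since $P\psi=0$ weakly, $Pu$ is supported on $\mathrm{supp}\,\nabla\chi$; the infinite-order vanishing hypothesis kills the inner-shell contribution as $\tau\to\infty$, and the exponential weight then forces $\psi\equiv 0$ on $B_{r_1/2}(x_0)$. A standard connectedness argument (the set of points at which $\psi$ vanishes to infinite order is open and closed in $\mathbb R^n$) upgrades this to $\psi\equiv 0$ globally.

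To prove the Carleman inequality I would conjugate, setting $L_\tau=e^{\tau\phi}P e^{-\tau\phi}$, split $L_\tau=L_\tau^{\mathrm{sym}}+L_\tau^{\mathrm{asym}}$ into its formally self-adjoint and anti-self-adjoint parts in a weighted $L^2$ pairing, and exploit
\begin{equation*}
\|L_\tau u\|^2=\|L_\tau^{\mathrm{sym}}u\|^2+\|L_\tau^{\mathrm{asym}}u\|^2+\langle[L_\tau^{\mathrm{sym}},L_\tau^{\mathrm{asym}}]u,u\rangle.
\end{equation*}
The commutator term, unpacked by integration by parts and a spherical-harmonic decomposition in the angular variable with radial coordinate $\log r$, is positive and produces the linear-in-$\tau$ gain on the left-hand side. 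Zeroth-order perturbations from $\tilde W$ and $|\tilde A|^2$ are absorbed using the standard Stummel--Kato form bound available for $f\in K_\mathrm{loc}^n$, and the Dini integrability \eqref{thetatilde} ensures the accumulated constants remain bounded uniformly as the scale shrinks.

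The principal obstacle is the magnetic first-order term $2\i\tilde A\cdot\nabla$. Since $\tilde A$ is only in $L_\mathrm{loc}^4$, it cannot be absorbed as a lower-order perturbation by direct means. The key observation is that, after the conjugation and the radial integration by parts, the symmetric part of the magnetic commutator reorganizes into the curl $F_{j\ell}=\partial_\ell\tilde A^j-\partial_j\tilde A^\ell$ contracted with $u$ and its angular derivatives; this is why the hypothesis enters in the gauge-invariant form $(r|F|)^2\in K_\mathrm{loc}^n$ rather than as a direct condition on $\tilde A$. The anti-symmetric part is then absorbed by Cauchy--Schwarz against $\|L_\tau^{\mathrm{asym}}u\|$. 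A parallel Rellich-type identity applied to $\tilde W$ produces the combination $2\tilde W+(x-x_0)\cdot\nabla\tilde W$, of which only the negative part must lie in $K_\mathrm{loc}^n$ (the positive part contributes with a favorable sign), and \eqref{thetatilde} once more controls the resulting error. Carrying out these two reorganizations cleanly, while tracking the constants in the Kato norms, is the technical heart of the argument; with them in place, the Carleman estimate and then strong UCP close as described above.
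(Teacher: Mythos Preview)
The paper does not prove this lemma at all: it is stated as Corollary~1.1 in Kurata~\cite{Kurata2} and used as a black box. There is therefore nothing in the present paper to compare your proposal against. Your outline is a reasonable sketch of Kurata's own Carleman-estimate argument, and if the task were to reconstruct Kurata's proof it would be a sensible starting point; but for the purposes of this paper the correct ``proof'' is simply the citation, and the authors immediately proceed to apply the lemma (with $n=D=3N$, $\tilde A=\underline A$, $\tilde W=W-e$) to obtain Theorem~\ref{Thm:adaptedK}.
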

\begin{proof}[Proof of Theorem~\ref{Thm:adaptedK}]
	We will show that Assumption~\ref{A:1} directly fulfills all the conditions of Lemma~\ref{lemma:Kurata} that thus becomes applicable. 
	Let $n=3N$ and $\tilde W = W -e$. By Assumption~\ref{A:1}, \eqref{Cond2} is then fulfilled. The choice $\tilde A = \underline A$  
	implies $T_A= (\i\nabla + \tilde A)^2$ and \eqref{SE2} can be written as~\eqref{eq:KurataSE}.

	Each component of $A\in L_\mathrm{loc}^4(\mathbb R^3)$ yields $\tilde A^j \in L_\mathrm{loc}^4(\mathbb R^{3N})$ for $j=1,\dots,3N$. 	
	From $\nabla \cdot \tilde A = \sum_{k=1}^N \nabla_k\cdot A(x_k)$ and
	$\nabla \cdot A \in L_\mathrm{loc}^2(\mathbb R^3)$, we obtain 
	$\nabla \cdot \tilde A \in L_\mathrm{loc}^2(\mathbb R^{3N})$. Since
	$a = |\tilde A|^2$, it holds that $|\tilde A|^2\in K_\mathrm{loc}^{3N}$. Moreover, the matrix $F$ satisfies
	\begin{equation}
	(|x-x_0|\,|F|)^2 = |x-x_0|^2 \sum_{j,l=1}^{3N} |F_{j,l}|^2 = 2 b_{x_0}(x), 
	\label{Ftob}
	\end{equation}
	since $F$ contains $N$ repeated blocks of sub matrices of the form
	\[
	\begin{bmatrix}
	0 & -B_3(x_j) & B_2(x_j)\\
	B_3(x_j) & 0 & -B_1(x_j)\\
	- B_2(x_j) & B_1(x_j) & 0
	\end{bmatrix}.
	\]
	This establishes \eqref{Cond1}.
	
	From \eqref{theta}, $\tilde W = W -e$ and \eqref{Ftob}, we conclude that \eqref{thetatilde} holds. Lemma~\ref{Lemma} gives the strong UCP for \eqref{SE2} and the proof is complete. 
\end{proof}

\begin{remark}
	As stated in Remark~1.1 in Kurata~\cite{Kurata2}, Lemma~\ref{lemma:Kurata} and thus Theorem~\ref{Thm:adaptedK} also holds if in Assumption~\ref{A:1},
	$K_\mathrm{loc}^{3N}$ is replaced by $K_\mathrm{loc}^{3N} +
	F_\mathrm{loc}^p(\mathbb{R}^{3N})$, $1<p \leq 3N/2$. Here $F_\mathrm{loc}^p(\mathbb{R}^{3N})$ is the
	Fefferman--Phong class and in this case a solution must be an element of $H_\mathrm{loc}^2(\mathbb{R}^{3N})  \cap L_\mathrm{loc}^\infty(\mathbb{R}^{3N}) $, and there is an additional condition on $V_-$. 
\end{remark}

\section{Main Results}
Theorem~\ref{Thm:adaptedK} above establishes the strong UCP under Assumption~\ref{A:1}. If in addition the negative part of $v$ is locally $L^{3/2}(\mathbb R^3)$ summable we obtain the UCP from sets of positive measure:
\begin{theorem} \label{thm:5}
	Suppose Assumption~\ref{A:1}. 
	If in addition $v_-\in L_\mathrm{loc}^{3/2}(\mathbb{R}^3)$ and $\psi \in
	H_{\mathrm{loc}}^2(\mathbb{R}^{3N})$ solves \eqref{SE2} and vanishes on a set of
	positive measure, then $\psi$ is identically zero. Consequently, the Schr\"odinger equation has the UCP from sets of positive measure.\label{Thm:Main}
\end{theorem}
\begin{remark}
The requirement $u\geq 0$ can be relaxed if one assumes that $u(x_1,x_2) = u'(x_1-x_2)$ (see
Lemma~A.2 in Lammert \cite{Lammert2018}).
\end{remark}
If the strong UCP can be obtained under other assumptions than Assumption~\ref{A:1}, the following corollary can be used to obtain the UCP from sets of positive measure.
\begin{corollary}\label{cor:strong2measure}
	Suppose the strong UCP for the Schr\"odinger equation (not necessarily by means of Assumption~\ref{A:1}), then the constraint $v_- + |A|^2 + \i (\nabla\cdot A) \in L_\mathrm{loc}^{3/2}(\mathbb R^3)$ gives the UCP from sets of positive measure.  
\end{corollary}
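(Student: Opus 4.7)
The plan is to use the assumed strong UCP as a black box: we show that positive-measure vanishing of a solution forces vanishing to infinite order at some point, after which strong UCP closes the argument. This reduction is the classical route (cf.~Regbaoui \cite{Regbaoui}, Kurata \cite{Kurata2}), and the extra $L_\mathrm{loc}^{3/2}(\mathbb R^3)$ hypothesis is tailored to supply the quantitative UCP bridge even when the configuration-space conditions of Theorem~\ref{Thm:Reg} are not available.

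Concretely, let $\psi\in H_\mathrm{loc}^2(\mathbb R^D)$ solve \eqref{SE2} and vanish on a set $E$ with $|E|>0$. By the Lebesgue density theorem, pick $x_0\in E$ of density one, so that $|B_r(x_0)\setminus E|=o(r^D)$. Since $\psi=0$ on $E$, we have
\[
\int_{B_r(x_0)}|\psi|^2\,\d x=\int_{B_r(x_0)\setminus E}|\psi|^2\,\d x,
\]
and the task is to upgrade the trivial bound $o(r^D)$ to $\mathcal O(r^m)$ for every $m>0$. Extract from \eqref{SE2} the differential inequality $|\Delta\psi|\le 2|\underline A|\,|\nabla\psi|+|W_A-e|\,|\psi|$ and apply a quantitative UCP/doubling-inequality argument to deduce infinite-order vanishing at $x_0$. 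The assumed strong UCP then yields $\psi\equiv 0$.

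The main obstacle is the quantitative-UCP step. Regbaoui's Theorem~\ref{Thm:Reg} delivers exactly this implication, but under $W_A\in L_\mathrm{loc}^{3N/2}(\mathbb R^{3N})$ and $\underline A\in L_\mathrm{loc}^{3N}(\mathbb R^{3N})$, which typically fail for molecular Coulomb potentials. The $L_\mathrm{loc}^{3/2}(\mathbb R^3)$ hypothesis of the corollary sidesteps this by exploiting the one-body structure: each summand $v_-(x_j)$, $|A(x_j)|^2$, and $\i\,\nabla\cdot A(x_j)$ appearing in $W_A$ depends only on the three-dimensional coordinate $x_j$ and can be paired against the one-particle Laplacian $-\Delta_j$ in the Carleman estimates via the three-dimensional Sobolev inequality in $x_j$, so the effective integrability dimension is $3$ rather than $3N$. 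The positive parts $v_+$ and $u\ge 0$ enter the energy/Carleman estimates with the favorable sign and pose no obstruction. This is the same mechanism that underlies Theorem~\ref{Thm:Main} and the approach of Garrigue \cite{Garrigue2018}; the corollary simply abstracts it away from Assumption~\ref{A:1}, taking strong UCP as an input rather than a conclusion.
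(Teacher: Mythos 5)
Your high-level skeleton is right: pick a density point $x_0$ of the null set $E$, upgrade the trivial $o(r^D)$ decay of $\int_{B_r(x_0)}|\psi|^2\,\d x$ to infinite-order vanishing, then invoke the assumed strong UCP. That is indeed the route the paper follows. But the middle step, from positive-measure vanishing to infinite-order vanishing, is the \emph{entire content} of the corollary, and your proposal does not actually carry it out: it asserts it via ``apply a quantitative UCP/doubling-inequality argument'' and a hand-wave toward Carleman estimates. That is not a proof but a restatement of what must be shown.

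Moreover, the mechanism you sketch is not the one that works here, and it is worth being precise about why. The paper's Corollary~\ref{cor:strong2measure} is a one-line consequence of the \emph{proof} of Theorem~\ref{Thm:Main}, and that proof does not re-run Carleman estimates at all (Carleman machinery lives inside the black-boxed strong UCP). Instead, it proceeds in two independent stages. First, an inverse Poincar\'e inequality $\int_{B_r}|\nabla\psi|^2\le Cr^{-2}\int_{B_{2r}}|\psi|^2$ is derived by testing the weak equation \eqref{SE3} against $\varphi=h^2\psi$ with $h$ a cutoff; here the $L^{3/2}_\mathrm{loc}(\mathbb R^3)$ hypothesis on $\Theta_1=v_-+N^{-1}|e|+6|A|^2+|\nabla\cdot A|$ enters through H\"older's inequality applied to the marginal density $\tilde\rho(x_1)=N\int|h\psi|^2\,\d x_2\cdots\d x_N$ together with the three-dimensional Sobolev inequality $\|\tilde\rho\|_3\le C\|\nabla\tilde\rho^{1/2}\|_2^2$ (Lieb's argument), not through pairing against $-\Delta_j$ in a Carleman weight. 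The positivity of $u$ is used to discard $U$ and $V_+$ with a favorable sign in \eqref{WAbound}. Second, one needs the measure-theoretic estimate of Lemma 3.3 in Regbaoui (equivalently, Lemma 3.4 of Ladyzenskaya--Ural'tzeva), namely $\int_{B_r\cap E^c}|\psi|^2\le C\,r^D|E|^{-1}|B_r\cap E^c|^{1/D}\int_{B_r}|\nabla(\psi^2)|$, which combines with Cauchy--Schwarz, the density-point estimate \eqref{EcE}, and the inverse Poincar\'e inequality to give the doubling bound \eqref{rto2r} and hence $f(r)=\mathcal O(r^m)$ for all $m$. Your proposal omits this lemma entirely, and it cannot be replaced by ``a differential inequality plus Sobolev'' — without it there is no bridge from $|E|>0$ to quantitative smallness of $\int_{B_r}|\psi|^2$. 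In short: right reduction strategy, but the two load-bearing ingredients (the $h^2\psi$-energy estimate yielding inverse Poincar\'e, and the Regbaoui/Ladyzenskaya--Ural'tzeva measure lemma) are missing, and the proffered explanation of where $L^{3/2}_\mathrm{loc}$ is used points at the wrong mechanism.
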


Due to the particular form of the potentials, we can write
\[
W(x) = \sum_j v(x_j)+\frac 1 2 \sum_{j\neq l} u(x_j,x_l).
\]
Because $Q_y$ is defined as the negative part of the function $2W + (x-y)\cdot \nabla W$, we have 
 with the choice $\underline x_0=( x_0,\dots, x_0) \in \mathbb R^{3N}$, for fixed $ x_0 \in \mathbb R^3$,  
\begin{equation}\label{eq:bigQ}
0\leq Q_{x_0}(x) = Q_{\underline x_0}(x) \leq \sum_{j} q_{1;x_0}(x_j) + \sum_{j\neq l} q_{2;x_0}(x_j,x_l)
\end{equation}
for some functions $q_{1;x_0}$ and $q_{2;x_0}$. (See below the proof of Corollary~\ref{cor:mol-case}, where this decomposition is done for the choice of $W$ corresponding to the molecular case.) 
Furthermore, 
\[
b_{x_0}(x) = b_{\underline x_0}(x)= \sum_{j,l=1}^{N}  |x_j-x_0|^2 |B(x_l)|^2
\]
can be split as 
\[
b_{x_0}(x) = \sum_{j} b_{1;x_0}(x_j) + \sum_{j\neq l} b_{2;x_0}(x_j,x_l).
\]

We can now formulate our main result that includes $H_N$ modelling atoms and molecules, and where the exponents in the integrability constraints are independent of the particle number $N$.

\begin{corollary}\label{cor:atoms}
For $N\geq 2$ and $ x_0 \in \mathbb R^3$ fixed, suppose
 \begin{equation} \label{eq:magA}
     \begin{split}
         & \nabla\cdot A \in L_\mathrm{loc}^2(\mathbb R^3),\quad A^j \in L_\mathrm{loc}^4(\mathbb R^3), \quad \vert A\vert^2\in K_\mathrm{loc}^3, \\
         &  b_{1;x_0}\in K_\mathrm{loc}^{3,\delta}, \quad b_{2;x_0}\in K_\mathrm{loc}^{6,\delta}.
     \end{split}
 \end{equation}
Further, let $v_-\in L_\mathrm{loc}^{3/2}(\mathbb R^3)$, $v \in K_\mathrm{loc}^3$, and $u\in K_\mathrm{loc}^6$, as well as $Q_{x_0}$ satisfying \eqref{eq:bigQ} with 
$q_{1;x_0}\in K_\mathrm{loc}^{3,\delta}$ and $q_{2;x_0}\in K_\mathrm{loc}^{6,\delta}$.
Then the Schr\"odinger equation \eqref{SE2} has the UCP from sets of positive measure. 
\end{corollary}
In particular, the magnetic Schr\"odinger equation has the UCP from sets of positive measure for $H_N$ modelling atoms and molecules in magnetic fields if just Eq.~\eqref{eq:magA} is fulfilled.
\begin{corollary} \label{cor:mol-case}
    For $N\geq 2$, suppose  the magnetic field is such that Eq.~\eqref{eq:magA} holds. Then with
\[
v(x_1)= -\sum_{j=1}^{M_\mathrm{nuc}}\frac{Z_j}{\vert x_{\mathrm{nuc};j} - x_1 \vert},\quad u(x_1,x_2) = \frac{1}{\vert x_1-x_2\vert},
\]
where $x_{\mathrm{nuc};j}\in\mathbb R^3$ and $Z_j>0$ are the positions and charges of the $M_\mathrm{nuc}$ nuclei, respectively, the UCP from sets of positive measure holds for the Schr\"odinger equation. 
\end{corollary}

\section{Application to CDFT}

In the presence of a magnetic field, no equivalence of a (general) Hohenberg--Kohn result exists at present~\cite{Tellgren2012,Laestadius2014}. However, we shall now address how the UCP from sets of positive measure for the magnetic Schr\"odinger equation plays an important role in the argument for restricted 
Hohenberg--Kohn theorems in CDFT and the non-universal variant magnetic-field density-functional theory (BDFT) of Grayce and Harris~\cite{GRAYCE}. Given a wave function $\psi$, define the particle density and the paramagnetic current density according to
\begin{align*}
\rho_\psi(x) &= N \int_{\mathbb{R}^{3(N-1)}} |\psi(x,x_2,\dots,x_N)|^2\, \mathrm{d} x_2\dots \mathrm{d} x_N,\\
j_\psi^p(x) &= N\,\text {Im} \int_{\mathbb{R}^{3(N-1)}} \overline{\psi}(x,x_2,\dots,x_N) \\
&\quad \quad \quad \times \nabla_{x}\psi(x,x_2,\dots,x_N)  \, \mathrm{d}x_2\dots \mathrm{d} x_N.
\end{align*}
For a vector potential $A$ we may compute the total current density by the sum
$j = j_\psi^p + \rho_\psi A$. Now, fix the particle number $N$ as well as the two-particle interaction $u$ (e.g.\ $u(x_1,x_2) =|x_1-x_2|^{-1}$) and write $H_N =H(v,A)$. 
If $\psi$ is a ground state for some $v$ and $A$, i.e., $H(v,A)\psi =e\psi$, where $e$ is the ground-state energy, 
then $\rho_{\psi}$, $j_{\psi}^p$, and $j= j_{\psi}^p + \rho_{\psi}A$ are called ground-state densities of $H(v,A)$. 
Whether the ground-state particle density $\rho$ and the total current density $j$ determine $v$ and $A$ (up to a gauge transformation) is 
still an open question in the general case~\cite{Tellgren2012,Laestadius2014}. (For the ground-state density pair $(\rho,j^p)$ it is well-known that this density pair does not determine the potentials $v$ and $A$~\cite{Capelle2002}.)

Now, assume that two systems with Hamiltonians $H(v_1,A_1)$ and $H(v_2,A_2)$ have the same ground-state particle density (i.e.\ $\rho_1=\rho_2=\rho$) and $\nabla\times A_1 =\nabla\times A_2=B$. Suppose that $v_k$, $A_k$ for $k=1,2$, and $B$ fulfill Assumption~\ref{A:1} and the requirements given in Theorem~\ref{thm:5}. Since there exists a function $f$ such that $A_1=A_2 - \nabla f$, the 
variational principle yields
\begin{align*}
    e_1 \leq \langle \psi_2, H(v_1,A_1) \psi_2 \rangle \leq e_2 + \int_{\mathbb R^2} (v_1-v_2) \rho \mathrm{d} x,
\end{align*}
where $\psi_2$ is the ground state of $H(v_2,A_2-\nabla f)$.
Switching the indices, we find that $$e_1-e_2 = \int_{\mathbb R^2} (v_1-v_2) \rho \mathrm{d} x.$$ Consequently, $\psi_2$ is a ground state of both $H(v_1,A_1)$ and $H(v_2,A_2 - \nabla f)$, which leads to 
\begin{align*}
&\left[H(v_1,A_1) -H(v_2,A_2-\nabla f) \right]\psi_2 \\
&\quad = \sum_{j=1}^N (v_1(x_j)- v_2(x_j))\psi_2 \\
&\quad = (V_1-V_2)\psi_2 = (e_1-e_2)\psi_2.
\end{align*}
However, Theorem~\ref{thm:5} allows us to conclude $\psi_2\neq 0$ and it follows $v_1=v_2 +$constant. 

\begin{theorem} 
	Assume $\nabla\times A_1 =\nabla\times A_2=B$ and that $v_k$, $A_k$ for $k=1,2$, and $B$ fulfill Assumption~\ref{A:1} and take the requirements of Theorem~\ref{thm:5} for $H(v_1,A_1)$ and $H(v_2,A_2)$ to hold. If the ground-state particle densities satisfy $\rho_1=\rho_2$, then $v_1=v_2 + C$ almost everywhere for some constant $C$.
	\label{BDFT}
\end{theorem}
\begin{remark} Theorem \ref{BDFT} is the Hohenberg--Kohn theorem for BDFT, first established by Grayce and Harris~\cite{GRAYCE} but missing the UCP argument (see also~\cite{Laestadius2014}).
\end{remark}
Theorem \ref{BDFT} can be used to obtain 
\begin{corollary} \label{cor:cdft}
	Assume Assumption~\ref{A:1} and the requirements of Theorem~\ref{thm:5} for $H(v_1,A_1)$ and $H(v_2,A_2)$ and that the ground-state densities fulfill $\rho = \rho_1=\rho_2$, $j= j_1=j_2$. For systems with $j^p=0$, it follows $B_1=B_2$ (even $A_1=A_2$ holds) and $v_1=v_2+C$ for some constant $C$.
\end{corollary}
\begin{proof}
For systems with $j^p = 0$, $j_1=j_2$ implies $$\rho A_1=\rho A_2,$$ 
since $\rho_1=\rho_2=\rho$. Theorem~\ref{thm:5} gives $\rho>0$ almost everywhere and we may conclude $A_1=A_2$. Theorem~\ref{BDFT} now gives $v_1=v_2 + C$ for some constant $C$.
\end{proof}

\section{Proofs of the main results}
\label{Sec:Proof}
\begin{proof}[Proof of Theorem~\ref{Thm:Main}] In the sequel let $D=3N$. By Assumption~\ref{A:1}, the strong UCP holds for \eqref{SE2} by Theorem~\ref{Thm:adaptedK}. Next, we follow the proof of Theorem~1.2 given after Lemma~3.3 in Regbaoui \cite{Regbaoui} and Lemma~A.2 in Lammert \cite{Lammert2018}. (Lemma~A.2 corresponds to setting $A=0$ here, and moreover we exploit $u\geq 0$ instead of the assumption $u(x_1,x_2)=u'(x_1-x_2)$.) We start by showing the following
	inverse Poincar\'e inequality for solutions of the Schr\"odinger
	equation: 
	
	For an arbitrary point $x_0 =(x_{0;j})_{j=1}^N \in\mathbb R^D$ and $r\leq r_0$
	\begin{equation}
	\label{Regbaoui1}
	\int_{B_{r}(x_0)} |\nabla \psi(x)|^2 \d x \leq \frac{C}{r^2}\int_{B_{2r}(x_0)} |\psi(x)|^2 \d x .
	\end{equation}
	Here $C$ is a positive constant that depends on $r_0>0$, $v$, and $A$ (but is independent of $u\geq 0$). 
	
	Choose $h \in C_0^\infty(B_{2r}(x_0))$ that satisfies $h(x) = 1$ if $|x-x_0|\leq r$, $h\leq 1$ for $|x-x_0|\leq 2r$, and $|\nabla h(x)| \leq 2r^{-1}$. In the Schr\"odinger equation \eqref{SE3}, 
	we choose $\varphi = h^2 \psi$ and move all terms except one to the right hand side so that 
	\begin{align}
	&\int_{\mathbb R^{D}} |h\nabla \psi|^2 \d x = -2\int_{\mathbb R^{D}} h (\nabla \psi) \cdot \overline{\psi} \nabla h \, \d x \nonumber \\
	&\quad - 2\i \int_{\mathbb R^{D}} \underline A \cdot (\nabla\psi) h^2\overline{\psi} \, \d x   +  \int_{\mathbb R^{D}}(e -W_A )|h\psi|^2 \d x.
	\label{mainEq}
	\end{align}
	We now bound each of the terms of the right hand side in \eqref{mainEq}. 
	
	It is immediate that the first term is less or equal to $2 \Vert h \nabla \psi \Vert_2 \Vert \psi \nabla h \Vert_2$. Using the inequality $2ab \leq a^2/6 + 6 b^2$, we obtain an upper bound 
	\begin{equation}
	\frac 1 6 \int_{\mathbb R^{D}} |h\nabla \psi|^2 \d x+ 6 \Vert \psi \nabla h \Vert_2^2. 
	\label{eq:I1}
	\end{equation}

	To continue, let $I_1 = - 2\i \int_{\mathbb R^{D}} \underline A
	\cdot (\nabla\psi) h^2\overline{\psi} \, \d x$. The
	Cauchy--Schwarz inequality together with $2ab \leq a^2/6 + 6 b^2$ yield   
	\begin{align}
	I_1 
	\leq  \frac 1 6  \int_{\mathbb R^{D}} |h\nabla \psi|^2 \d x + 6 \int_{\mathbb R^D} |\underline A|^2 |h \psi|^2 \d x.
	\label{eq:I2}
	\end{align}

	For the last term of the right hand side in \eqref{mainEq}, we use the definition of $W_A$ and write 
	$e-W_A  = e- W - |\underline A|^2 - \i (\nabla \cdot \underline A)$. 
	Thus
	\begin{align*}
	\int_{\mathbb R^{D}}(e -W_A )|h\psi|^2 \d x &= \int_{\mathbb R^D} (e- W - |\underline A|^2) |h\psi|^2 \d x \\
	&\quad -\i \int_{\mathbb R^D} (\nabla\cdot \underline A) |h\psi|^2 \d x
	\end{align*}
	and it follows from $W = V_+ + U_+ - V_-\geq - V_-$ that
	\begin{align}
	\int_{\mathbb R^{D}}(e -W_A )|h\psi|^2 \d x &\leq \int_{\mathbb R^D}( V_- + |e|) |h\psi|^2 \d x  \nonumber \\
	&\quad + \int_{\mathbb R^D} |(\nabla\cdot \underline A)| |h\psi|^2 \d x.
	\label{WAbound}
	\end{align}

	Define $\Theta = V_- + |e| + 6|\underline A|^2 + |\nabla \cdot \underline A|$, from \eqref{eq:I1}, \eqref{eq:I2}, and \eqref{WAbound} we obtain an upper bound for the right hand side of \eqref{mainEq} 			given by   
	\begin{align}
	\frac 1 3 \int_{\mathbb R^{D}} |h\nabla \psi|^2 \d x +  6 \Vert \psi \nabla h \Vert_2^2 + \int_{\mathbb R^D}  \Theta |h\psi|^2 \d x.
	\label{eq:I123}
	\end{align}
	With the notation $ \Theta_1 =  v_-+ N^{-1}|e| +6|A|^2 +|\nabla \cdot A| $,
	the inequality 
	$
	\Theta(x) \leq  \sum_{j=1}^N \Theta_1(x_j)
	$ 
	holds. Furthermore, we have
	\[
	\int_{\mathbb R^D}  \Theta(x) |h\psi|^2 \d x \leq \sum_{j=1}^N\int_{\mathbb R^D}  \Theta_1(x_j) |h\psi|^2 \d x =I_2,
	\]
	where the last equality defines $I_2$.

By assumption $v_-\in L_\mathrm{loc}^{3/2}(\mathbb R^3) $, 
$|A|^2\in L_\mathrm{loc}^2(\mathbb R^3)$, and $\nabla \cdot A \in
	L_\mathrm{loc}^2(\mathbb R^3)$, and it follows that $\Theta_1 \in L_\mathrm{loc}^{3/2}(\mathbb R^3)$.	
	To bound the term $I_2$ from above, 
	we closely follow Lammert \cite{Lammert2018} and define $\tilde \rho(x_1) = N\int_{\mathbb R^{3(N-1)}}  |h\psi|^2 \d x_2\cdots \d x_N$. For $M>0$ we let 
	$M' = \Vert \Theta_1\chi_{B_{2r}(x_{0;1}) } \chi_{\{ \Theta_1 \geq M \}} \Vert_{3/2}$, where the characteristic function of a set $X$ is denoted $\chi_X$.
	H\"older's inequality gives 
	\begin{align*}
	I_2 &= \int_{\{\Theta_1< M\}} \Theta_1 \tilde \rho\,  \d x_1 + \int_{\{\Theta_1 \geq M\}}  \Theta_1 \tilde \rho \,\d x_1 \\
	&\leq M \Vert h\psi\Vert_2^2 +	M' \Vert \tilde \rho \Vert_3,
	\end{align*} 
	and by a Sobolev inequality $\Vert \tilde \rho\Vert_3\leq C \Vert \nabla (\tilde \rho^{1/2} ) \Vert_2^2$. A direct computation of 
	$ \nabla (\tilde \rho^{1/2} )$, using the definition of $\tilde \rho$, shows that $\Vert \nabla (\tilde \rho^{1/2} ) \Vert_2^2 \leq \int_{\mathbb R^D} |\nabla (h\psi)|^2 \d x$ (see also the original argument of Lieb~\cite{Lieb1983}, Theorem~1.1). From $|a+b|^2 \leq 2|a|^2 + 2|b|^2$, we get
	\begin{equation*}
	\Vert \nabla (\tilde \rho^{1/2} ) \Vert_2^2  \leq  2 \int_{\mathbb R^{D}} |h\nabla\psi|^2 \d x + 2 \Vert \psi \nabla h \Vert_2^2.
	\end{equation*}
	We choose $M>0$ such that $2CM' \leq 1/6$ and then one has
	\begin{equation}
	\label{eq:I3}
	I_2 \leq\frac 1 6  \int_{\mathbb R^{D}} |h\nabla\psi|^2 \d x + \frac 1 6 \Vert \psi \nabla h \Vert_2^2 + M \Vert h\psi\Vert_2^2.
	\end{equation}

	Returning to \eqref{eq:I123}, we set $C = [74 + 3M r_0^2]/3$ and use \eqref{eq:I3} and $|\nabla h|\leq 2/r$ to conclude for $r\leq r_0$
	\begin{align*}
	\frac 1 2 \int_{\mathbb R^{D}} |h\nabla \psi|^2  \d x&\leq  \frac {37} 6 \Vert \psi \nabla h \Vert_2^2 + M \Vert h\psi \Vert_2^2 \\
	&\leq \frac {C}  {r^2} \int_{B_{2r}(x_0)} |\psi|^2 \d x.
	\end{align*}
	Hence \eqref{Regbaoui1} holds.

Suppose $\psi\in H_\mathrm{loc}^2$ vanishes on a set $E$ of positive measure.
	Almost every point  of $E$ is a density point. Let $x_0$ be such a density point and let $B_r = B_{r}(x_0)$. Given
	$\varepsilon>0$ there is an $r_0=r_0(\varepsilon)$ so that (cf.\ (3.11) in Regbaoui \cite{Regbaoui})
	\begin{equation}
	\frac{|{E}\cap B_{r}|}{|B_{r}|} \geq 1 - \varepsilon, \quad\frac{|{E}^c\cap B_{r}|}{|B_{r}|}\leq\varepsilon, \quad \text{for}\quad r\leq r_0.
	\label{EcE}
	\end{equation}
	Lemma 3.3 in Regbaoui \cite{Regbaoui} (or Lemma~3.4 in Ladyzen\-skaya--Ural'tzeva \cite{Ladyzenskaya1968}) gives
	\begin{equation}
	\int_{B_r \cap E^c}|\psi|^2 \d x \leq C \frac{r^D}{|E|} |B_r \cap E^c|^{1/D} \int_{B_r} |\nabla (\psi^2)|\d x 
	\label{lemma33}
	\end{equation}
	for some constant $C$.
	Applying the Cauchy--Schwarz inequality to the right hand side of \eqref{lemma33}, we obtain
	\[
	\int_{B_r }|\psi|^2 \d x\leq  C \frac{r^{2D}}{|E|^2} |B_r \cap E^c|^{2/D} \int_{B_r} |\nabla \psi|^2 \d x
	\]
	for some new constant $C$. Since $|E|\geq |E \cap B_r|$,
	\eqref{EcE}, and the inverse Poincar\'e inequality
	\eqref{Regbaoui1} allow us to conclude that
	\begin{align}
	\int_{B_r }|\psi|^2 \d x &\leq  C \frac{\varepsilon^{2/D}}{(1-\varepsilon)^2} r^2 \int_{B_r} |\nabla \psi|^2 \d x \nonumber \\
	&\leq C' \frac{\varepsilon^{2/D}}{(1-\varepsilon)^2}  \int_{B_{2r}} |\psi|^2 \d x.
	\label{rto2r}
	\end{align}

Introduce the function $f(r)=\int_{B_r} |\psi|^2 \d x$, fix an integer $n$ and choose $\varepsilon>0$ so that
	$C'\varepsilon^{2/D}/(1-\varepsilon)^2 =2^{-n}$. Then \eqref{rto2r} can be written $f(r)\leq 2^{-n}f(2r)$. By iteration 
	\[
	f(r') \leq 2^{-kn} f(2^kr'),\quad r'\leq 2^{1-k}r_0
	\]
	holds. For fixed $r$ and $k$ chosen such that $2^{-k}r_0 \leq r \leq 2^{1-k}r_0$, it follows that
	\[
	f(r)\leq 2^{-kn} f(2r_0) \leq \left(\frac{r}{r_0}\right)^n f(2r_0),
	\]
	where $r_0$ depends on $n$. Consequently $f$ vanishes to infinite order, i.e., for all $m$ there is an $r_0(m)$ such that
	\[
	f(r)=\int_{B_r}|\psi|^2 \d x \leq C_mr^m,\qquad r\leq r_0(m).
	\]
	That $\psi = 0$ follows now by the strong UCP given by Theorem~\ref{th:strongUCP}.
\end{proof}
\begin{proof}[Proof of Corollary~\ref{cor:strong2measure}]
	This is a consequence of the proof of Theorem~\ref{Thm:Main}, since $\Theta_1$, by assumption, is an element of $L_\mathrm{loc}^{3/2}(\mathbb R^3)$.
\end{proof}
\begin{proof}[Proof of Corollary~\ref{cor:atoms}]
We first demonstrate that the conditions of Corollary~\ref{cor:atoms} fulfills Assumption~\ref{A:1}. 
Due to the particular form of the potentials, we make use of the following: Let $f_1 \in K_\mathrm{loc}^{3,\delta}$ and $f_2 \in K_\mathrm{loc}^{6,\delta}$. Then both $\sum_{k=1}^N f_1(x_k)$ and $\sum_{k\neq l} f_2(x_k,x_l)$ are elements of $K_\mathrm{loc}^{3N,\delta}$. 
Similar statements for $K^n$ can be found in Simon \cite{Simon1982} (Example~F) and Aizenman--Simon \cite{Aizenman} (Theorem~1.4). 
We prove our claim by direct computations. Define $I_1^\delta$ and $I_2^\delta$ according to
\begin{align*}
I_1^\delta(x)&= \sum_{j=1}^N\int_{B_{r}(0)}\frac{\vert f_1(y_j+x_j) \vert  }{(y_1^2+\cdots + y_N^2)^{\frac{3N-2+\delta}{2}}}\d y_1\cdots \d y_N, \\
I_2^\delta(x)&= \sum_{j\neq l} \int_{B_r(0)}
\frac{\vert f_2(y_j+x_j,y_l+x_l) \vert  }{(y_1^2+\cdots + y_N^2  )^{\frac{3N-2+\delta}{2}}} \d y_1\cdots \d y_N .
\end{align*}
We next demonstrate that
\begin{align}\label{eq:KatoI1}
I_1^\delta(x) &\leq C_N \int_{B_{r;3}(x)} \frac{\vert f_1(y_1) \vert}{\vert y_1-x_1 \vert^{3-2 +\delta}} \d y_1,\\
I_2^\delta(x) & \leq C_N \int_{B_{r;6}(x)} \frac{\vert f_2(y_1,y_2) \vert}{\vert (y_1,y_2)-(x_1,x_2) \vert^{6-2 +\delta}} \d y_1 \d y_2,
\label{eq:KatoI2}
\end{align}
where the second index in the given ball-sets $B_{r;3}(x) \subset \R^3$ and $B_{r;6}(x) \subset \R^6$ refers to the respective dimensionality.

To show \eqref{eq:KatoI1}, set $q=(y_2,\dots,y_N)$ and note that
\begin{align*}
&I_1^\delta(x) \leq N \int_{B_{r;3}\times B_{r;3(N-1)}}
\frac{\vert f_1(y_1+x_1) \vert  }{(y_1^2 +q^2)^{\frac{3N-2+\delta}{2}}} \d y_1 \d q \\
&= C_N \int_{B_{r;3}} \vert f_1(y_1+x_1) \vert 
\left(  \int_0^r \frac{q^{3(N-1)-1} \d q}{(y_1^2+q^2)^{\frac{3N-2 + \delta}{2}}}   \right) \d y_1 \\
& = C_N \int_{B_{r;3}} \frac{\vert f_1(y_1+x_1) \vert}
{\vert y_1 \vert^{3N-2+\delta}} \\
&\quad \quad  \times \left(  \int_0^r \frac{q^{3N-4} \d q}{(1+(q/\vert y_1\vert)^2)^{\frac{3N-2 + \delta}{2}}}   \right) \d y_1 \\
& \leq C_N \, J_1^\delta\int_{B_{r;3}(x_1)} \frac{\vert f_1(y_1) \vert}{\vert y_1-x_1 \vert^{3-2 +\delta}}\d y_1,
\end{align*}
where we have defined the integral
\[
J_1^\delta = \int_0^\infty \frac{s^{3N-4}}
{(1+s^2)^{\frac{3N-2+\delta}{2}}}\d s.
\] 
Now, $J_1^\delta$ is finite since
\begin{align*}
J_1^\delta \leq\int_0^1 s^{3N-4}\d s + \int_1^\infty s^{-2-\delta} \d s <+\infty.
\end{align*}
This establishes \eqref{eq:KatoI1}. The proof of \eqref{eq:KatoI2} is similar and included for the sake of completeness. Set $q=(y_3,\dots,y_N)$, then
\begin{align*}
&I_1 \leq N \int_{B_{r;6}\times B_{r;3(N-2)}}
\frac{\vert u(y_1+x_1,y_2+x_2) \vert  }{(y_1^2+y_2^2  +q^2)^{ \frac{3N-2+\delta}{2}}} \d y_1 \d y_2 \d q \\
&= C_N \int_{B_{r;6}} \vert u(y_1+x_1,y_2+x_2) \vert  \\
&\quad \quad \times \left(  \int_0^r \frac{q^{3(N-2)-1} \d q}{(y_1^2+y_2^2+q^2)^{ \frac{3N-2 + \delta}{2}}}   \right) \d y_1 \d y_2\\
& = C_N \int_{B_{r;6}} \frac{\vert u(y_1+x_1,y_2+x_2) \vert}
{\vert (y_1,y_2) \vert^{3N-2+\delta}} \\
&\quad \quad \times \left(  \int_0^r \frac{q^{3N-7} dq}{(1+(q/\vert (y_1,y_2)\vert)^2)^{ \frac{3N-2 + \delta}{2}  }}   \right) \d y_1 \d y_2 \\
& \leq C_N \int_{B_{r;6}(x_1,x_2)} \frac{\vert u(y_1,y_2) \vert}{\vert (y_1,y_2)-(x_1,x_2) \vert^{6-2 +\delta}} \d y_1 \d y_2 \\
&\quad \quad \times \int_0^\infty \frac{s^{3N-7}}{(1+s^2)^{\frac{3N-2+\delta}{2}}}\d s.
\end{align*}
Corollary~\ref{cor:atoms} now follows from Theorem~\ref{Thm:Main} (Eq.~\eqref{theta} in Assumption~\ref{A:1} is fulfilled by Remark~\ref{rmk:Kurata}). 
\end{proof}

\begin{proof}[Proof of Corollary~\ref{cor:mol-case}]
We first reduce the molecular case to the atomic one. 
Since the UCP from sets of positive measure is local, it can be applied to any open set
in the domain individually. So instead of one singularity (the $y$ of
Assumption~\ref{A:1}), we can treat an arbitrary (yet countable) number of
singularities if they do not have an accumulation point. For this just choose an
open cover $\{U_j \}$ of $\mathbb R^3$ where each $U_j$ contains not more than one
nucleus $x_{\mathrm{nuc};j}$. It remains to show that all $q_{x_{\mathrm{nuc};j}}$, $b_{x_{\mathrm{nuc};j}}$
belong to the respective local Kato classes and we are done if we prove the results for atoms.

In the sequel we let  $v(x_1) = - Z|x_1- x_\mathrm{nuc}|^{-1}$, $ x_\mathrm{nuc}\in\mathbb R^3$, $Z>0$, and 
$u(x_1,x_2)=|x_1-x_2|^{-1}$. In this case $v_-\in L_\mathrm{loc}^{3/2}(\mathbb R^3)$ and with the choice $\underline x_\mathrm{nuc}=( x_\mathrm{nuc},\dots, x_\mathrm{nuc}) \in \mathbb R^D$, we have with $Q_{x_\mathrm{nuc}}(x)  = Q_{\underline x_\mathrm{nuc}}(x) $ the equality
\begin{align*}
Q_{x_\mathrm{nuc}}(x) &= \Big(\sum_j  \frac{-2Z}{|x_j-x_\mathrm{nuc}|} + \sum_{j\neq l} \frac{1}{|x_j - x_l|}\\
&\quad +\sum_j (x_j-x_\mathrm{nuc}) \cdot\nabla_j  \Big(\frac{-Z}{|x_j-x_\mathrm{nuc}|}\Big)   \\
& \quad  +   \frac 1 2\sum_{j\neq l} \big[(x_j-x_\mathrm{nuc})\cdot\nabla_j + (x_l-x_\mathrm{nuc}) \cdot\nabla_l\big]\\
&\quad \quad \times \frac{1}{|(x_j- x_\mathrm{nuc}) - (x_l - x_\mathrm{nuc})|}\Big)_- \\
&= (V(x)+U(x))_- \leq 2V_-(x)\,.
\end{align*}
Thus, in this case we can choose $q_{1,x_\mathrm{nuc}} = v_-$ and $q_{2,x_\mathrm{nuc}} =0$.

Furthermore, for $0<\delta <1$, we claim that $V,U \in K_\mathrm{loc}^{D,\delta}$. By the first part it suffices to show  $v\in K_\mathrm{loc}^{3,\delta}$ and $u\in K_\mathrm{loc}^{6,\delta}$. For $v\in  K_\mathrm{loc}^{3,\delta}$, we introduce polar coordinates with radius $s$ and polar angle $t$. 
Then it holds that $\d y= 2\pi s^2 \sin t \,\d t \d s $ and $y\cdot x = -s\vert x \vert \cos t$. For $f_1(x)=\vert x\vert^{-1}$ it follows that  
\begin{align*}
\eta^K(r;f_1)
=\sup_{ |x|\leq R} \int_0^{r}   
\Big(   \int_{0}^{\pi}    \frac{2\pi  s^{1-\delta} \sin t \,\d t  }{( s^2 - 2  s |x|\cos t + |x|^2)^{1/2} }\Big) \d s .
\end{align*}
We integrate over $t$, use $\big|  s + |x|- | s  - |x||\big| \leq 2 |x|$,  
and the conclusion is obtained for $v$.\\
\indent In a similar fashion, for $u$ we establish that with $f_2(x)= \vert x_1-x_2\vert^{-1}$ 
\begin{align*}
\eta^K(r;f_2) \leq C \int_0^{r} \left( \int_{s_1}^{r} \frac{ s_2 \, \d s_2 }{( s_1^2 + s_2^2)^{2+\frac{2}{\delta}}} \right) s_1^2 \d s_1 
\leq  C r^{1-\delta}
\end{align*}
such that it follows $u\in K_\mathrm{loc}^{6,\delta}$, since
\begin{align*}
	&\int_{B_r(0)} \frac{1}{\vert y_1-y_2 \vert} \frac{1}{\vert y\vert^{6-2+\delta}}\d y_1 \d y_2 \\
 &\leq C \int_0^r \int_0^r  
	\left(\int_0^\pi \frac{2\pi  s_1^2 s_2^2 \sin t \,\d t  }{( s_1^2 - 2  s_1 s_2\cos t + s_2^2)^{1/2} }\right)\\ 
	&\quad \times \frac{\d s_1\d s_2}{(s_1^2+s_2^2)^{2+\frac{2}{\delta}}}\\
	& \leq C \int_0^r \int_0^r \frac{(s_1+s_2 - \vert s_1-s_2\vert) }{(s_1^2+s_2^2)^{2+\frac{2}{\delta}}}s_1s_2 \d s_1\d s_2 \\
	&\leq C \int_0^r \left[  \int_0^{s_1} \frac{s_1 s_2^2 \d s_2}{(s_1^2+s_2^2)^{2+\frac{2}{\delta}}} + \int_{s_1}^r \frac{s_1^2 s_2 \d s_2}{(s_1^2+s_2^2)^{2+\frac{2}{\delta}}}    \right] \d s_1 \\
	& \leq C \int_0^r\int_{s_1}^r \frac{ s_2 \d s_2}{(s_1^2+s_2^2)^{2+\frac{2}{\delta}}} s_1^2\d s_1 \leq Cr^{1-\delta}.
\end{align*}
The atomic case is now a consequence of Corollary~\ref{cor:atoms}.
\end{proof}

\section{Conclusion}

In this work we were able to show the unique-continuation property from sets of positive measures for the important case of the many-body magnetic Schr\"odinger equation for classes of potentials that are independent of the particle number. This is crucial in order to not artificially restrict the permitted potentials in large systems. We further specifically addressed molecular Hamiltonians, thus covering most cases that usually arise in physics.

\section*{Acknowledgements}

AL is grateful for the hospitality received at the Max Planck Institute for the Structure and Dynamics of Matter while visiting MP. The authors want to thank Louis Garrigue for useful discussion that led
to the inclusion of the molecular case, and Fabian Faulstich for comments and suggestions that improved the manuscript.

\section*{Funding Information}
The work of AL was supported by the Norwegian
Research Council through the CoE 
Hylleraas Centre for Quantum Molecular Sciences Grant No. 262695 and 
through the European Research Council, ERC-STG-2014 Grant Agreement
No.~639508. The work of MB was supported by the Swedish Research Council
(VR), Grant No.~2016-05482. 
The work of MP was supported by the Erwin Schr\"odinger
Fellowship J 4107-N27 of the FWF (Austrian Science Fund).

\bibliography{refs}

\end{document}